\newcommand{\ket}[1]{\ensuremath{| #1 \rangle}}
\newcommand{\eq}[1]{Eq.~(\ref{eq:#1})}
\newcommand{\be}{\begin{equation}}
\newcommand{\ee}{\end{equation}}
\newcommand{\bea}{\begin{eqnarray}}
\newcommand{\eea}{\end{eqnarray}}
\newcommand{\mean}[1]{\ensuremath{\langle{#1}\rangle}}
\renewcommand{\rho}{\varrho}
\renewcommand{\tilde}{\widetilde}
\theoremstyle{definition}
\newtheorem{theorem}{Theorem}
\newcommand{\Span}{\ensuremath \text{span}}
\newcommand{\rank}{\ensuremath \text{rank}}
\renewcommand{\vec}[1]{\ensuremath \bm{#1}}
\newcommand{\cone}{\text{cone}}
\newcommand{\tr}{\text{Tr}}
\newcommand{\kommentar}[1]{}
\begin{document}

\title{Generalizing optimal Bell inequalities}

\author{Fabian Bernards}
\author{Otfried Gühne}%
\affiliation{
Naturwissenschaftlich-Technische Fakultät, 
Universität Siegen, 
Walter-Flex-Straße 3, 
57068 Siegen, Germany}

\date{\today}

\begin{abstract}
Bell inequalities are central tools for studying nonlocal correlations 
and their applications in quantum information processing. Identifying 
inequalities for many particles or measurements is, however, difficult 
due to the computational complexity of characterizing the set of local 
correlations. We develop a method to characterize Bell inequalities under 
constraints, which may be given by symmetry or other linear conditions. This 
allows to search systematically for generalizations of given Bell 
inequalities to more parties. As an example, we find all possible 
generalizations of the two-particle inequality by Froissart 
[Il Nuovo Cimento {\bf B64}, 241 (1981)], also known as I3322 inequality, 
to three particles. For the simplest of these inequalities, we study
their quantum mechanical properties and demonstrate that they are relevant, 
in the sense that they detect nonlocality of quantum states, for which 
all two-setting inequalities fail to do so. 
\end{abstract}

\maketitle

{\it Introduction.---}
Bell nonlocality describes the fact that quantum mechanical 
correlations are stronger than the ones predicted by local
hidden variable (LHV) models \cite{brunnerbellreview, scarani2019}. 
This manifests itself in the violation of Bell inequalities, which 
have been observed experimentally \cite{shalm2015,hensen2015,giustina2015, rosenfeld2017}. 
Besides ruling out hidden variable models, however, Bell inequalities are 
also essential for studying information theory with distributed parties 
\cite{buhrmann2010}. In fact, various Bell inequalities found interesting 
applications and led to surprising insights to quantum information processing. 
Examples are the connection between Bell inequalities and communication complexity 
\cite{brukner2004, tavakoli2019}, multi-player games which demonstrate the difference
between quantum mechanics and non-signaling theories \cite{almeida2010},
Bell inequalities that are useful for multiparty conference key agreement 
\cite{holz2019} and the emerging field of self-testing, where Bell 
inequalities play also a central role, see \cite{supic2019} for a review.
Therefore, although LHV models are considered to be experimentally refuted, 
it is desirable to identify Bell inequalities with interesting properties. 

Characterizing all the Bell inequalities is, however, not straight forward. 
For a given number of parties, if one fixes the number of measurement
settings and the outcomes per measurement, the set of all probabilities 
coming from LHV models forms a high-dimensional polytope and the Bell 
inequalities correspond to the facets of this polytope 
\cite{peres1999, pitowsky1991}. The extremal points of this polytope are 
easy to characterize, but it is computationally very demanding to find all 
the facets from the extremal points \cite{pitowsky1991}. The rising complexity 
can be directly illustrated with an example. If one considers two parties 
with two measurements ($A_1, A_2$ and $B_1, B_2$) and two outcomes ($\pm 1$), 
it is well known that there is, up to relabelings
and permutations, only one optimal Bell inequality \cite{fine1982}, known as the 
Clauser-Horne-Shimony-Holt (CHSH) inequality \cite{clauser1969, clauser1970erratum}. 
It reads
\begin{align}
\mean{A_1 B_1} +
\mean{A_1 B_2} +
\mean{A_2 B_1} -
\mean{A_2 B_2} 
\le 2.
\label{eq-chsh}
\end{align}
Some generalizations of it, using more particles but still two measurements 
per site, have been found \cite{werner2001,zukowski2002, sliwa2003}.

If one considers two particles with three measurements, the analysis becomes 
already considerably harder. It was shown that there is only one additional 
Bell inequality. This was first identified by Froissart \cite{froissart1981}, 
and later independently by \'Sliwa \cite{sliwa2003} and Collins and Gisin 
\cite{collins2004} \footnote{Collins and Gisin also presented a complete 
list of facet defining Bell inequalities for the scenario, in which Alice 
has four and Bob has three settings.}. It reads
\begin{align}
\mean{A_1} & - \mean{A_2} + \mean{B_1} - \mean{B_2} - 
\mean{(A_1 - A_2)(B_1 - B_2)}
\nonumber
\\ 
& + \mean{(A_1 + A_2)B_3} + \mean{A_3(B_1 + B_2)}
\le 4.
\label{eq-i3322}
\end{align}
This inequality, henceforth called I3322 inequality, has several interesting 
properties: It detects the nonlocality of some two-qubit states which are not 
detected by the CHSH inequality \cite{collins2004}, and, for higher dimensions, 
the maximal violation of it is not attained at maximally entangled states 
\cite{vidick2011}. In general, the violation of I3322 is conjectured to 
increase with the dimension of the underlying quantum system \cite{pal2010}
and, since the maximal violation can definitely not occur in small-dimensional
systems \cite{moroder2013, navascues2015}, I3322 can be used for the device-independent 
characterization of the dimension. Clearly, it is highly desirable to find 
generalizations of I3322 to three or more particles, but the exponentially 
increasing complexity of a brute force approach has prevented this so far.

In this paper, we present a systematic approach to find all Bell inequalities 
obeying some linear constraints. This constraint may be given by a desired 
symmetry or it may be formulated such that the Bell inequality should reduce 
to some fixed expression in special cases. With our method, we find all 3050 
generalizations of the I3322 inequality to three particles, and then we 
characterize the physical properties of the simplest ones of the new 
inequalities. Our method generalizes previous methods to characterize Bell 
inequalities constrained by symmetry \cite{bancal2010} in a 
constructive manner, as it can be used to systematically find all 
generalizations of a known inequality.

{\it Bell inequalities and convex geometry.---}
A Bell scenario is characterized by the number of parties $N$, the 
number of measurement settings per party $I$ and the number of outcomes 
$O$ per measurement. For a given scenario, the behavior of a physical 
system  can be encoded in a vector that contains all the probabilities 
for the outcomes of all possible joint measurements on the subsystems. 
Concretely, the correlations of a system are encoded in the vector that 
contains all probabilities $p(o_1,...,o_N|i_1, ... , i_N)$, where 
$o_k \in \{1,...,O\}$ denotes the outcomes for the settings 
$i_k \in \{1,...,I\}$. In the simple case of 
dichotomic measurements (with outcomes $\pm 1$), the correlations can 
alternatively be characterized as the collection of the expectation 
values of joint measurements. This is also how the CHSH- and I3322 
inequality were formulated above.

We call systems that can be described by an LHV model classical. Such
systems can be characterized in the following way. First, one considers the 
finite set of probability vectors with local deterministic assignments, 
where each probability is $0$ or $1$ and the results of joint measurements 
on different systems are uncorrelated. Then, one considers all probabilistic 
mixtures of these vertices, i.e. the convex hull. The resulting set of all classical 
behaviors forms a convex polytope, the so-called local polytope. As any polytope, 
the local polytope is uniquely represented by its vertices as extremal points,
but one can alternatively describe it by linear inequalities.

In fact, a Bell inequality is nothing but a linear inequality in the space of the 
probabilities, which holds on the local polytope. So, it defines two half-spaces 
in the space of correlations, such that one half-space contains the local polytope and the other
half-space is detected as nonlocal. Clearly, the most efficient characterization 
of the local polytope is the minimal set of half-spaces whose intersection equals 
the polytope. These correspond to the optimal Bell inequalities and are characterized 
by the facets of the polytope \footnote{Note that for experimental tests other
notions of optimality may be advantageous, see \cite{vandam2005} for a discussion.}. It is easy to check whether a given linear inequality 
corresponds to a facet of the local polytope, as one just has to find enough vertices 
that saturate it. The conversion from the vertex description to the half-space
description of the local polytope is, however, difficult in practice and is 
known to be hard for the general case \cite{pitowsky1991}. 

{\it The main method.---}
We consider the situation where a convex polytope $P$ in vertex representation 
is given and one aims to find all facets of $P$ that satisfy some linear conditions.
The naive way of solving this problem is to first find all facets of $P$ and filter 
out all those inequalities that do not meet the conditions in a second step. In 
practice, however, already the first step is often to complex to be solved.
Our method to solve this problem works for the case that the conditions are 
affine equality constraints on the coefficients of the facet defining inequality.

\begin{figure}[t]
  \subfloat[\label{sfig:1a}]{\includegraphics[width=0.17\textwidth]{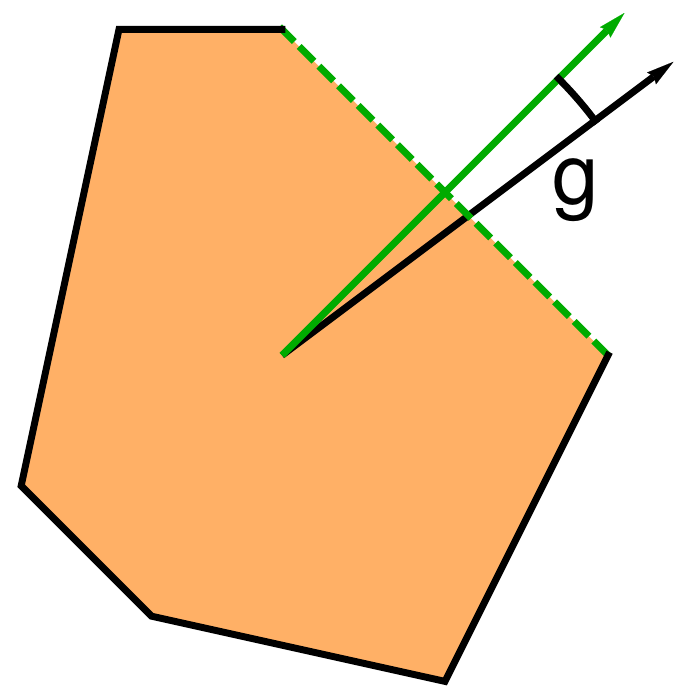}}
  \hspace{0.3cm}
  \subfloat[\label{sfig:1b}]{\includegraphics[width=0.25\textwidth]{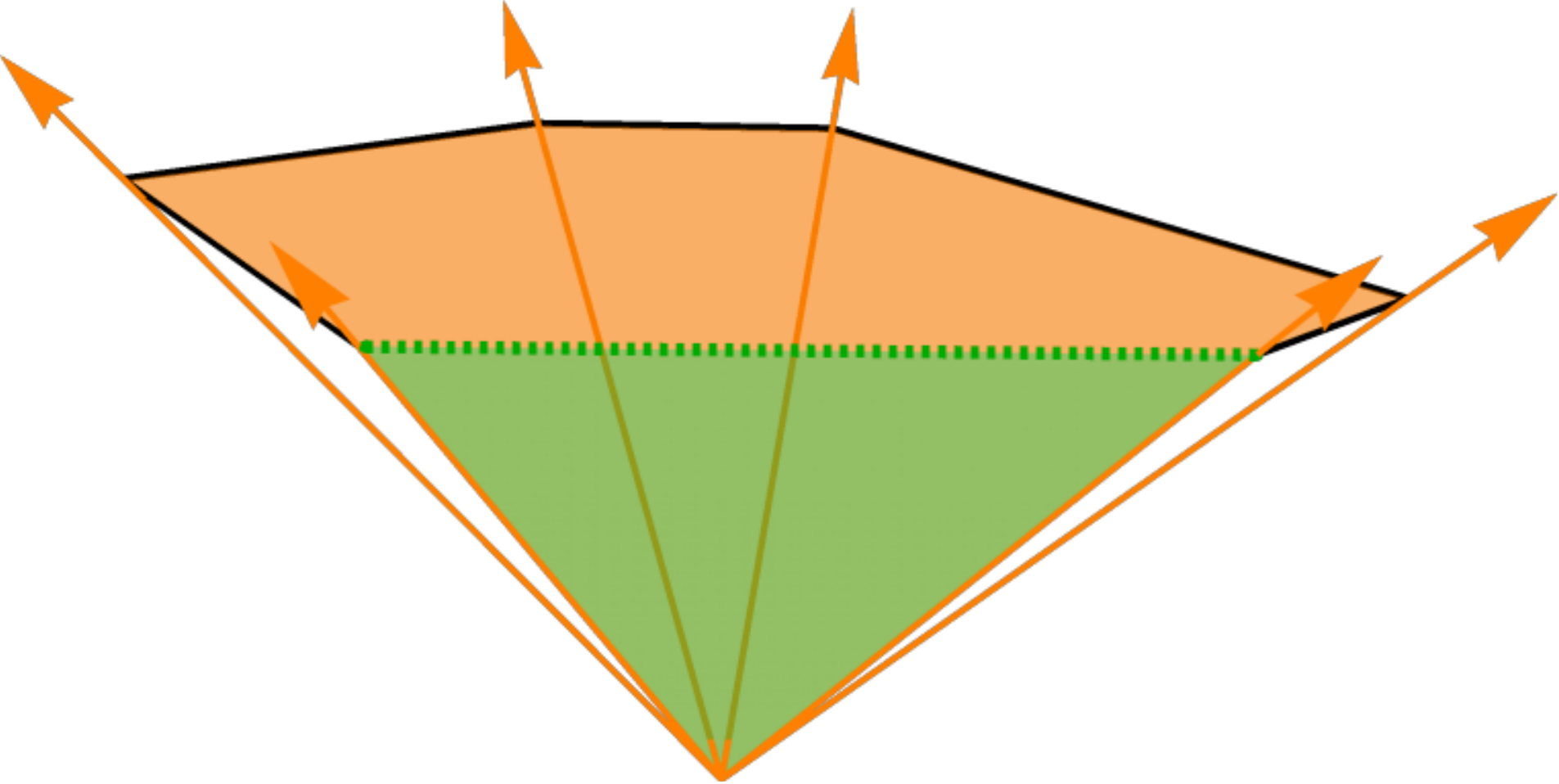}}
  
   \subfloat[\label{sfig:1c}]{\includegraphics[width=0.22\textwidth]{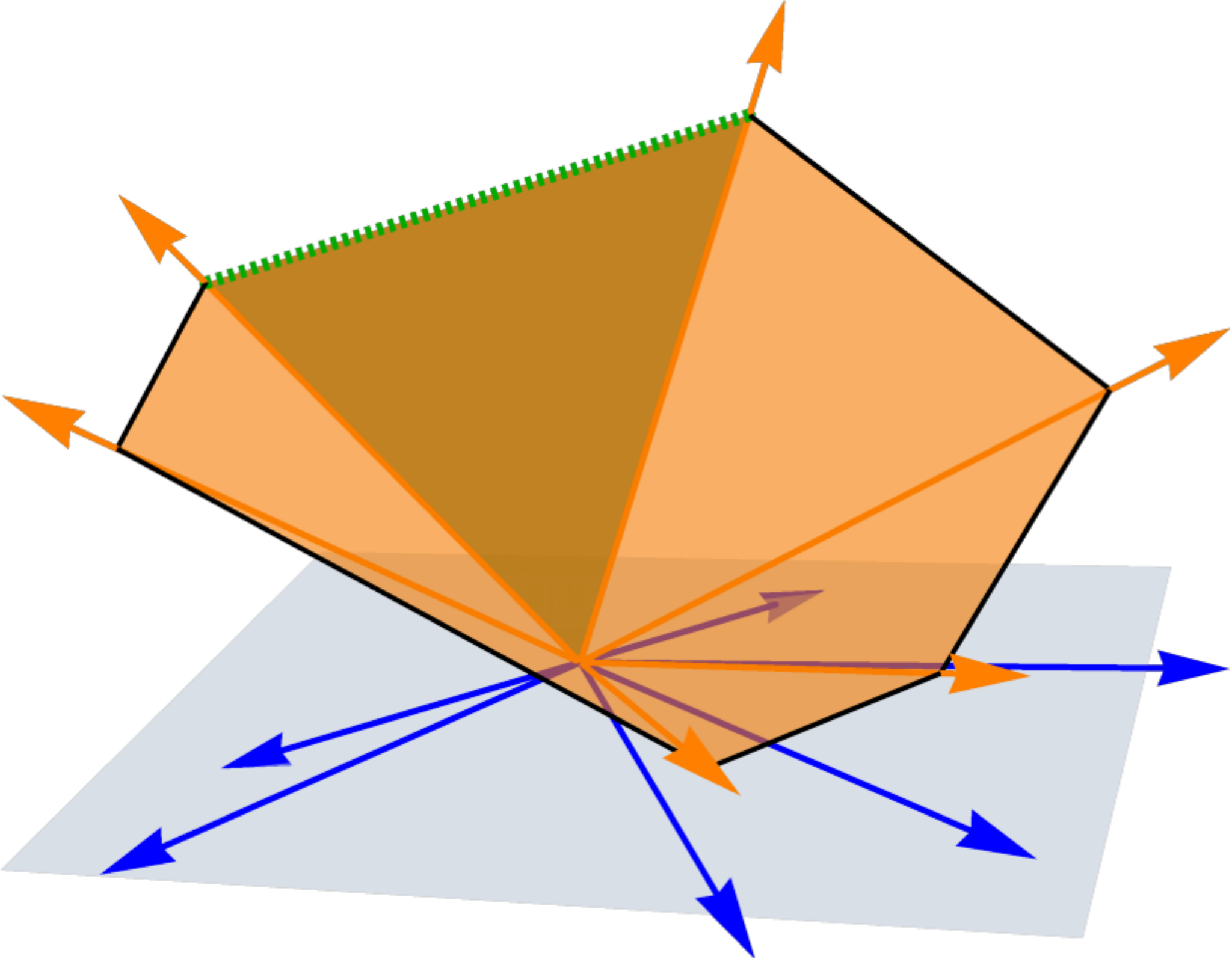}}
  \hspace{0.3cm}
  \subfloat[\label{sfig:1d}]{\includegraphics[width=0.21\textwidth]{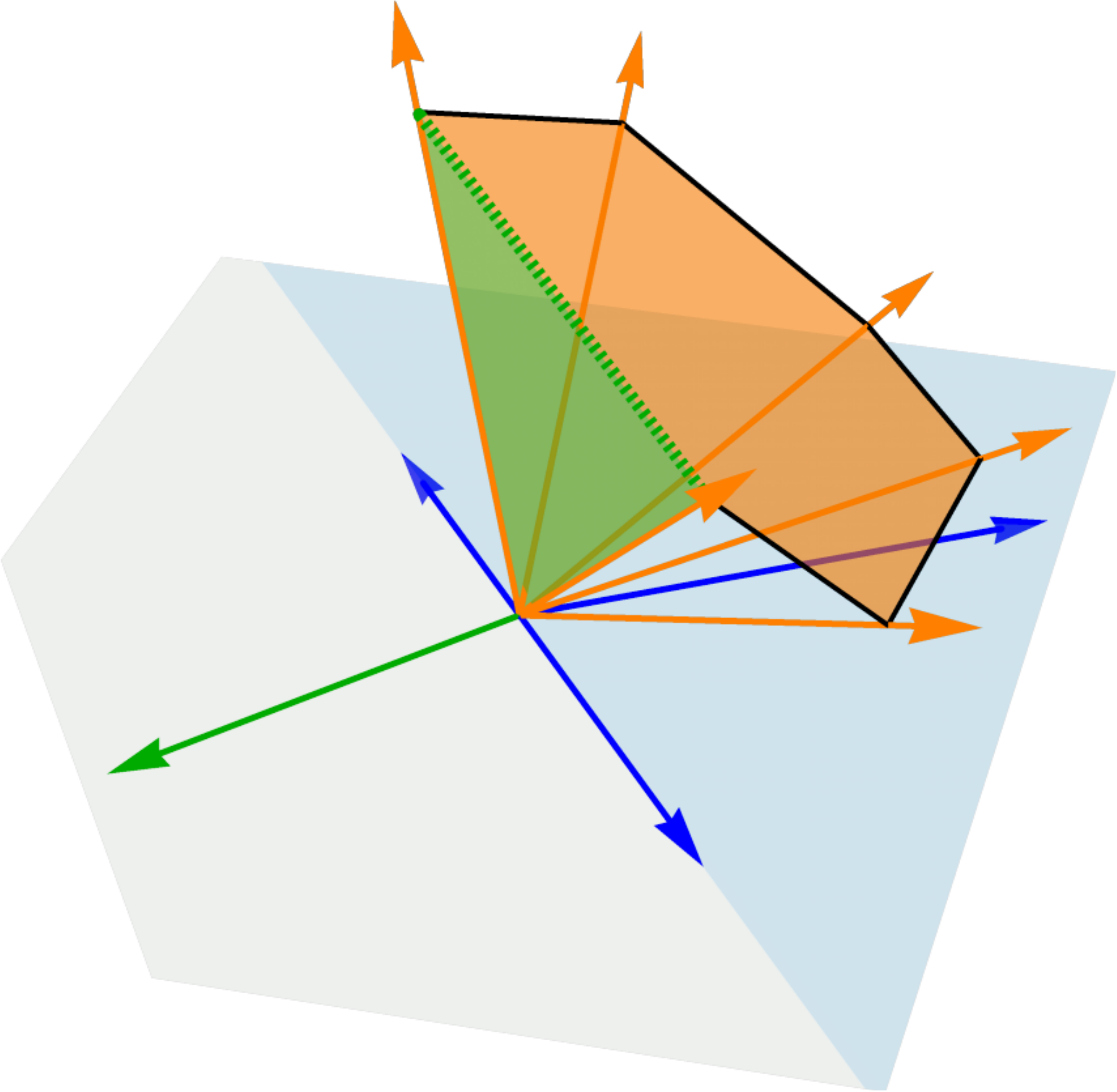}}
  
 \caption{Visualization of the method to find Bell inequalities
that obey some constraints. 
(a) We aim to find facets of the two-dimensional polytope $P$ with 
a normal vector $\vec{b}_P$ that has a fixed scalar product with 
some vector $\vec{g}$. The facet and the normal vector that fulfill 
this constraint drawn in green. 
(b) We embed the polytope $P$ in a plane in three-dimensional space. 
The coordinates of the vertices of $P$ define rays (orange arrows) that 
define the cone $C$. The polytope $P$ is then a section of the cone 
$C$ and each of its facets relates to a facet of $C$ (green) in a unique 
way. 
(c) The initial constraint on the facet of $P$ can be translated. 
A facet of $P$ fulfills the constraint if and only if the corresponding 
facet of $C$ has a normal vector $\vec{b}_C$ that obeys a linear
constraint $G \vec{b}_C=0$, which means that it has to lie in a
plane (light-grey). Then, we project the rays of $C$ (orange) into that 
plane (blue arrows) to define a cone $\tilde C$ as generated by the 
projected rays. By construction, facets of $C$ which obey the constraint
are also facets of $\tilde C$.
(d) Finally we find the facets of $\tilde C$ and check which ones 
correspond to facets of $C$. From the facets of $C$ that meet the 
conditions we can then compute the corresponding facets of $P$. 
In our example, $\tilde C$ is a half plane (light-blue) and has only one facet,
with the normal vector in green. It is also the normal vector of 
a facet of $C$ (green). Note that in the example $\tilde C$ 
is already generated by three rays and the other three rays are 
redundant.
}
\label{fig1}
\end{figure}

In the following, we describe the main ideas of the different steps, 
details are given in Appendix A \footnote{See the supplemental material, 
which includes references \cite{ziegler, cdd, wolf2009, abramsky2011,
horodecki1996,peres1996, mosek, picos}.}. We start with the original $D$-dimensional 
polytope $P$ and the constraints, which can be written in the form that the 
normal vector $\vec{b}_P$ on the facet has a fixed scalar product with some 
vector $\vec{g}$, see Fig.~\ref{fig1}(a). Then, we construct a cone $C$ in 
$D+1$-dimensional space that maintains a one-to-one correspondence to
the polytope, since the polytope can be seen as a cut of the cone, see 
Fig.~\ref{fig1}(b). Notably, there is a one-to-one correspondence between 
the facets of the cone and those of the polytope, and a normal vector 
$\vec{b}_P$ of a polytope facet translates to a normal vector $\vec{b}_C$ 
of a cone facet. Also, this construction allows to write the constraints in 
a linear form as $G \vec{b}_C=0$, see Fig.~\ref{fig1}(c).

The key observation is that in this situation we can define a new cone
$\tilde C$, such that if $\vec{b}_C$ is a facet normal vector of $C$ that obeys
the constraints, then $\vec{b}_C$ is also a facet normal vector of $\tilde C$. 
This is done by projecting the rays of $C$ down to the subspace of vectors
obeying $G \vec{v}=0$, see Fig.~\ref{fig1}(c). The advantage is that $\tilde C$ 
lies in a significantly lower-dimensional space. Additionally, $\tilde C$ has 
typically much less rays than $C$. That makes it easier to find all the 
facets of $\tilde C$, compared with $C$. Having obtained all the facets of
$\tilde C$, it remains to check whether they are facets of $C$. If this
is the case, they automatically obey the constraints, see 
Fig.~\ref{fig1}(d).

{\it Tripartite generalizations of I3322.---}
Let us apply our method to the scenario where three parties perform three 
dichotomic measurements. This scenario is already too complex to find all facets 
of the local polytope: 
The three-partite scenario with two dichotomic measurements per party has 64 
vertices in 26 dimensions and  53856 facets already \cite{sliwa2003}. Concerning the scenario we 
are interested in, we  only know that the local polytope has 512 vertices 
in a 63-dimensional space. Bancal and coworkers have investigated this scenario while 
restricting themselves to symmetric, full-body correlation inequalities 
and they found 20 facet defining Bell inequalities of this type \cite{bancal2010}. 
However, due the fact that marginal correlations prove to be vital
for the I3322 inequality, the restriction to Bell inequalities without them seems to be ad-hoc.
 
The notion of generalizing a bipartite Bell inequality to more parties is 
best introduced by example. Consider the Mermin inequality for three 
parties,
\begin{align}
\mean{A_1 B_1 C_2} + \mean{A_1 B_2 C_1} + \mean{A_2 B_1 C_1} - \mean{A_2 B_2 C_2}  
\leq 2. 
\label{eq-mermin}
\end{align}
If one assigns the fixed values $C_1=C_2=1$ (or some other values) on the 
third party, then this inequality reduces to the CHSH inequality in Eq.~(\ref{eq-chsh}) 
(or a variant thereof). In this sense, one may view the Mermin inequality as a generalization
of the CHSH inequality.

The formal definition of a generalization of the I3322 inequality is as follows. 
First, since all measurements are dichotomic, we write all Bell inequalities in 
terms of expectation values of observables. Observables that refer to measurements 
of party $A$ ($B$, $C$) are denoted as $A_i$ ($B_j$, $C_k$) and we define 
$A_0, B_0, C_0$ to be trivial measurements on the respective parties that 
always yield a measurement result $+1$. This conveniently allows to treat 
marginal terms such as $\mean{A_1}=\mean{A_1 B_0 C_0}$ and constant terms 
such as $1 = \mean{A_0 B_0 C_0}$ on the same footing. In this notation, 
any Bell inequality can be written as 
$S=\sum_{i,j,k} b_{ijk} \langle A_i B_j C_k \rangle \ge 0.$

We call a three-partite Bell inequality a generalization of the I3322 
inequality if the following three conditions hold: 
(a) The inequality is symmetric under exchange of the parties, each 
of which can perform three different non-trivial dichotomic measurements, 
(b) The inequality corresponds to a facet of the local polytope, and 
(c) There is an assignment $C_k \to \xi_k \in\{\pm 1\}$ for the 
observables on the third party, such that the remaining inequality 
$\sum_{i,j,k} b_{ijk} \xi_k \langle A_i B_j\rangle \ge 0$
is the I3322 inequality as in Eq.~(\ref{eq-i3322}). The analogous 
condition with the same $\xi_k$ holds for any other reduction to 
two parties. Note that this approach is different from the notion 
of lifting Bell inequalities to more parties \cite{pironio2005}, 
where, e.g., a two-party Bell inequality is applied to the conditional 
probabilities of a three-party system. 

The properties (a) and (b) are natural requirements, because both are 
characteristic features of the I3322 inequality. Besides its mathematical
motivation, condition (c) has physical consequences. It implies that the 
new inequality detects nonlocality in a three-partite state 
$\varrho_{ABC}$ every time the I3322 inequality detects it in the reduced
state $\varrho_{AB}.$ Indeed, if condition (c) holds, one can just
choose trivial measurements for the $C_k$ that yield outcomes $
\xi_k \in\{\pm 1\}$ independently of the state. Then the generalization of I3322 also detects
the nonlocality. 

For our method, we also note that condition (c) can be 
reformulated as follows: An inequality like 
$\sum_{i,j,k} b_{ijk} \xi_k \langle A_i B_j\rangle \ge 0$
is the I3322 inequality, if and only if equality holds for the 
set of vertices for which also equality holds in I3322. These bipartite 
vertices may be lifted to three-partite vertices by adding the variable
$\xi_k$, then any generalization of the I3322
inequality has to be saturated by these lifted vertices. The critical reader 
may ask at this point why we are only considering the special form of I3322 
as in Eq.~(\ref{eq-i3322}), and not equivalent forms arising from a relabeling 
of the observables or a sign flip. First, since we defined generalizations of 
I3322 to be symmetric, we only have to take symmetric versions of I3322 into 
account. Further, as one can easily check, all symmetric versions of I3322 can 
be transformed into each other just by outcome relabelings on both parties. 
Consequently, it suffices to only consider 
one symmetric version of I3322.

We can now find all generalizations of I3322 in four steps: First, we 
find all vertices of the local polytope of two parties which saturate 
I3322. Second, we choose deterministic outcomes $\xi_k$ on $C$ and determine 
the corresponding vertices in the local polytope of three parties. 
In the third step, we compose the matrix $G$ for the condition $G \vec{b}_C=0$.
This matrix contains each symmetry condition and each vertex from step 2 as
a row. Finally, we employ our method as described above to find all facet defining 
inequalities that meet the criteria. We then repeat steps two to four until all 
possible choices for deterministic outcomes $\xi_k$ are exhausted. In our case 
of three dichotomic measurements, there are eight possibilities. In this way, we 
find all symmetric, facet-defining generalizations of I3322, $3050$ inequalities 
in total. Details of the implementation are given in Appendix B and the complete 
list is given in the supplemental material.

 
{\it Properties of the generalizations of I3322.---}
Let us now examine  the three 
simplest ones among the generalized I3322 inequalities in some detail. For 
convenience, we introduce a short-hand 
notation for symmetric Bell inequalities. We define symmetric correlations 
as $(ijk)= \sum_{\pi\in \Pi} A_{\pi(i)} B_{\pi(j)} C_{\pi(k)}$, where $\Pi$ 
denotes the set of all permutations of the indices $ijk$ that give different 
terms. Note that in this notation $(112)= A_1 B_1 C_2 + A_1 B_2 C_1 + A_2 B_1 C_1$,  
so permutations leading to the same term are not counted multiple
times. Further, as noted before, settings labeled with index zero refer to 
trivial measurements that always yield the result $1$. 
Using this notation, the I3322 inequality in Eq.~(\ref{eq-i3322}) can be 
written as
\begin{align}
(01) - (02) - (11) - (22) + (12) + (13) + (23) \le 4.
\end{align}
The three generalizations of I3322 that involve the least number of symmetric
correlations are given by:
\begin{align}
F_1 & =   8  + (110) - (210) + (211) + (220) 
\nonumber \\  & + (222) -2 (331) -2 (332) \ge 0,
\label{eq-bi1} \\
F_2 &=  9  + (110) +2 (220) -2 (221) - (300) 
\nonumber \\ & - (310) + (311) -2 (322) \ge 0,
\label{eq-bi2} \\
F_3& = 9  - (210) + (211) + (220) +3 (222) 
\nonumber \\ &
- (300) - (310) + (311) -2 (322) \ge 0.
\label{eq-bi3}
\end{align}

Let us start our analysis with the possible violations in quantum mechanics. 
It has recently been shown that all inequalities that exclusively utilize full
correlations are violated in quantum mechanics \cite{escola2020}, however, our 
inequalities also contain marginal terms. 

A way to obtain bounds on possible quantum values of Bell inequalities
is given by the hierarchy of Navascués, Pironio and Ac\'{\i}n (NPA) 
\cite{navascues2007, navascues2008, wittek-package}. 
For the inequality $F_1$, this shows
that in quantum mechanics $F_1 \geq -8$ holds. The minimal value $F_1 = -8$ can 
indeed be reached, namely by a three-qubit GHZ state 
$\ket{GHZ_3}=(\ket{000}+\ket{111})/\sqrt{2}$ and measurements
$A_1 = A_2 = \sigma_x$, 
$A_3 = \sigma_y$, 
$B_1 = B_2 = -\sigma_y$, 
$B_3 =\sigma_x$ 
and $C_i = B_i$.
In fact, $F_1$ reduces to the Mermin inequality in Eq.~(\ref{eq-mermin}) if
the first two measurement settings on each party are chosen equal.

Concerning $F_2$, a numerical optimization suggests that one optimal 
choice of settings is given by 
$A_1 = -A_3 = \sigma_z$, 
$A_2 = -\sigma_x$, 
$B_1 = -B_3 = -\sigma_z$, 
$B_2 =-\sigma_x$, 
and $C_i = B_i$.
This leads to a quantum mechanical violation of $F_2=4(1-\sqrt 7)\approx-6.58301$
for the three-qubit state
\begin{align}
\ket{\psi_2}= a\ket{W_3} + b\ket{111},
\end{align}
with $\ket{W_3}=(\ket{001}+\ket{010}+\ket{100})/\sqrt{3}$ being the three-qubit
W state and 
$a=\sqrt{19+2\sqrt{7}}/\sqrt{74} \approx 0.57294$
and
$b=\sqrt{55-2\sqrt{7}}/\sqrt{74} \approx 0.81960.$ 
The violation attained by this state coincides up to numerical precision
with the lower bound on $F_2$ for quantum states from the NPA hierarchy.
It is interesting that $F_2$ is maximally violated by a state that does not
belong to the frequently studied three-qubit states (such as the states
considered in \cite{bruss1998, hein2004, guehne2014}). In this way, the Bell 
inequality $F_2$ may open an avenue for new methods of self-testing 
quantum states \cite{supic2019}.

For the inequality $F_3$ the third level of the NPA hierarchy bounds 
the quantum mechanical values by $F_3 \geq -4.63097.$ Within numerical 
precision, this can be attained using the three-qubit state 
\begin{equation}
|\psi_3 \rangle = \cos(\varphi) |W_3\rangle + \sin(\varphi) |GHZ_3\rangle 
\end{equation}
with $\varphi=4.0^{\circ}$. The required measurement settings (for $i=1,2,3$)
are
$A_i = \cos(\alpha_i) \sigma_z + \sin(\alpha_i) \sigma_x$
and
$B_i= C_i = \cos(\alpha_i) \sigma_z - \sin(\alpha_i) \sigma_x$,
where
$\alpha_1 = 141.6^{\circ}, \alpha_2 = 22.6^{\circ} , \alpha_3 = 101.6^{\circ} .$
Again, we find a non-standard three-qubit state leading to the maximal 
violation of the Bell inequality. The state $\ket{\psi_3}$ is close to 
a three-qubit W state but the difference is significant, as for the
W state one can only reach a violation of $F_3 = -4.59569$.

Now we clarify whether the new three-setting inequalities are indeed 
relevant, that is, whether they detect the nonlocality of some quantum 
states, where all two-setting inequalities fail to do so. This question 
can be answered positively for all {\it three} inequalities; moreover, 
all the $F_i$ detect also entanglement that is not 
detected by the I3322 inequality in the reduced states. To show this, 
we provide a three-qubit state 
$\varrho_{ABC}$ with separable two-body marginals that has a symmetric 
extension for each $F_i$, such that the respective Bell inequality $F_i$ is
violated. A symmetric extension of 
$\varrho_{ABC}$ is a five-qubit state $H_{ABB'CC'}$ that is symmetric 
under exchange of the parties  $B$, $B'$ (and $C$, $C'$) such that 
$\varrho_{ABC} =\tr_{B'C'}(H_{ABB'CC'})$. A state that has a such a 
symmetric extension cannot violate any Bell inequality with with an
arbitrary number of settings for Alice and two settings for Bob and 
Charlie \cite{terhal2003}, see also Appendix C. Note that here the number of
outcomes for each setting is unrestricted. Thus this is a stronger
statement than proving that the known inequalities for two settings
and two outcomes  \cite{werner2001,zukowski2002, sliwa2003} are not violated.

We find the desired state $\varrho_{ABC}$ using a seesaw algorithm 
that alternates between optimizing the measurement settings (for the
violation of the $F_i$) and the state (under some constraints).
If measurement settings for all parties are fixed, finding a 
state with a symmetric extension that maximally violates a given 
$F_i$ is a semidefinite program. On the other hand, given a state 
and measurement settings for two parties, finding the optimal 
settings for the third party is also a semidefinite program. 
Details and examples are discussed in Appendix C.

{\it Conclusion.---}
We presented a method to find all Bell inequalities for
a given Bell scenario under some constraints. Our method
does not require to characterize the local polytope 
completely, instead, all candidates for the desired Bell 
inequalities can be found in a low-dimensional projection
of the cone corresponding to the polytope. Using our method, 
we characterized all generalizations of the I3322 inequality
to three particles. It turned out that already the simplest ones of 
these generalizations have interesting properties, making
an experimental implementation of them desirable.

Our method can be used for many other purposes, where convex
polytopes play a role. First, one may consider other Bell scenarios, 
e.g., cases where not all parties have the same number of measurements. 
In addition, one can generalize also contextuality inequalities \cite{budroni2020} 
or recently discussed inequalities for testing certain views of quantum mechanics 
\cite{bong2019}. Moreover, it could be interesting to extend
our approach beyond the scenario of linear optimization: While 
the local polytope in Bell scenarios can be characterized by the optimization
method of linear programming, other forms of quantum correlations, such as
quantum steering, are naturally described in terms of convex
optimization and semidefinite programs \cite{cavalcanti2017, uola2020}. Thus, a 
generalization of our methods to this type of problems could
give more insight into various problems in information 
processing.

{\it Acknowledgements.---}
We thank Jędrzej Kaniewski, Miguel Navascués, Chau Nguyen and 
Denis Rosset for fruitful discussions. 
This work has been supported by the Deutsche Forschungsgemeinschaft (DFG, German 
Research Foundation - 447948357) and the ERC (Consolidator Grant 
683107/TempoQ). FB acknowledges support from the House of Young Talents 
of the University of Siegen.

\section{Appendix A: Details of the method}
\label{Appmethod}

\subsection{Some terminology for convex polyhedra}

We follow the standard terminology as introduced in the textbook 
by Ziegler \cite{ziegler} and refer to convex polyhedra simply as 
polyhedra, since we are only concerned with the convex variety.
The two arguably most fundamental concepts in this context are 
conic combinations and convex combinations.
Conic combinations are linear combinations with positive coefficients. 
Convex combinations are linear combinations
with positive coefficients that sum up to one. Given a set $V$, the set that contains all conic (convex) combinations
of the elements in $V$ is called the conic (convex) hull of $V$. Vice versa, the conic (convex) hull of $V$ is said to be generated
by $V$ under conic (convex) combinations. If $V$ is finite, its convex hull is called finitely generated. These notions give rise
to the two main objects that are studied, cones and polytopes, the first being finitely generated under conic combinations and the second
being finitely generated under convex combinations. The elements of $V$ are called vertices in the case of polytopes. For cones, they are called
rays.

The next useful definition is the Minkowski sum of two sets. Given two sets $A$ and $B$, 
this sum is defined as $A+B = \{a+b | a \in A, b \in B\}$. 
The Minkowski sum of a cone and a polytope is called polyhedron. The sets of rays and vertices that generate the polyhedron are called the
V-representation of the polyhedron. That said, there is a second important represenation of polyhedra: the H-representation. Any
polyhedron is the intersection of a finite number of half-spaces, defined by affine inequalities. The minimal set of half-spaces whose intersection is 
the polyhedron is its H-representation. With every of these half-spaces, we can associate the hyperplane that bounds it. The intersection between
such a hyperplane and the polyhedron is called a facet of the polyhedron. If an inequality defines a half-space such that its bounding hyperplane contains a
facet of a given polyhedron, this inequality is called facet-defining with respect to the polyhedron.
If a polyhedron $P$ has dimension $D$, its facets are the $D-1$ dimensional
polyhedra that together form the boundary of $P$. Intersections of facets are
called faces. Hereby, an $n$-face is a face that has an (affine) dimension of
$n$. Accordingly in the case of a $D$ dimensional polytope, $0$-faces are
vertices, $1$-faces are edges and $(D-1)$-faces are facets.

\subsection{Proof of the main statement}

In the following, we describe a method to find all facet defining inequalities
of a polytope of dimension $D$ that satisfy a set of 
linear equations. However, the possibility of these equations being
inhomogenious makes that task cumbersome. Therefore, we embed the polytope in
a $D+1$ dimensional space by prepending a coordinate to every vertex $\vec v_i$, that is we
define 
\begin{align}
\vec w_i = \binom{1}{\vec v_i}.
\end{align}
We now consider the cone $C$ that is generated by the vectors $\vec w_i$. Note that
we can retain the original polytope by intersecting the cone $C$ with the
hyperplane defined by $x_0 = 1$. Because of its close relationship with the
polytope, we call this cone the corresponding cone  of the polytope. This relationship is not only reflected
in the V-representation. There is also a one-to-one relation between the facet defining inequalities. Let
\begin{align}
\vec x^T \vec b_P \ge -\beta
\end{align}
be a facet defining inequality of the polytope. Then, 
\begin{align}
(1 \; \vec x^T) \binom{\beta}{\vec b_P} \ge 0
\end{align}
is the corresponding facet defining inequality of the cone.

Now assume that we aim to find facets of the polytope that satisfy a given affine constraint
\begin{align}
\vec g^T \vec b_P = \gamma.
\end{align}
Then we can reformulate this as an equivalent linear constraint on the facets of the cone by writing
\begin{align}
(g_0 \; \vec g^T) \binom{\beta}{\vec b_P} = 0
\end{align}
and choosing $g_0 = -\frac{\gamma}{\beta}$. 
If we want to implement an affine constraint that also takes 
the bound of the inequality into account, one can accomplish this by adding
another dimension, simply repeating the previous construction -- this time for
the cone.
From the above discussion we conclude that finding facet defining inequalities of a polyhedron that satisfy affine equality constraints is equivalent to
finding facet defining inequalities of the corresponding cone that satisfy linear equality constraints. We can hence pose the problem as follows: Given a cone
$C$, find all of its facet normal vectors $\vec b_C$ that satisfy 
\begin{align}
G \vec b_C = 0, \label{eq:kernel}
\end{align}
where $G$ is the matrix that captures all linear constraints. 
From \eq{kernel} it  follows that $\vec b_C$ is in the
kernel of $G$. Let
$\ker G = \Span(\{\vec t_j\}_{j=1}^{K})$. We define the matrix $T$
whose $K$ columns of length $D+1$ are made up by the vectors $\vec t_j$, so we have $T_{ij} =
[\vec t_j]_i$. This
allows us to decompose $\vec b_C$ in the basis of $\ker G$ as
    \begin{align}
    \vec b_C = T\vec b_{\tilde C}, \label{eq:btilde}
    \end{align}
where $\vec b_{\tilde C}$ is a vector of dimension $K$.
We will now find a cone $\tilde C$ that has the same dimension as the kernel of $G$,
such that any time $\vec b_C$ defines a facet of
$C$ that satisfies $G \vec b_C = 0$, $\vec b_{\tilde C}$ defines a facet of $\tilde C$. 
Since $\tilde C$ is lower-dimensional compared to $C$ and has at most the same number of rays, it is easier to find the facets of $\tilde C$ than of $C$. The effectivity and feasibility of
the method therefore relies crucially on the dimension $K$ of the kernel of $G$. The 
higher-dimensional and difficult it is to find facets of $C$,
the more linearly independent constraints are needed in order 
to make $\tilde C$ low dimensional and simple enough so we can find its facets.

The following theorem establishes the construction of the cone $\tilde C$.
\begin{theorem}
Let $C = \cone(\{\vec w_i\})$ be a cone and  $\vec b_C$ a facet-normal vector of $C$ that
satisfies $G \vec b_C = 0$ for some matrix $G$. With $T$ and $\vec b_{\tilde C}$ defined as above, we define the cone 
$\tilde C = \cone(\{\vec{\tilde w_i}\})$ of dimension $K$ with $\vec{\tilde
w_i}^T = \vec{w_i}^T T$. Then $\vec b_{\tilde C}$
defines a facet of $\tilde C$.
\end{theorem}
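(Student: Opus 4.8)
The plan is to verify three things about the linear inequality $\vec x^T \vec b_{\tilde C} \ge 0$ on $\tilde C$: that it is valid; that the generators of $\tilde C$ it makes tight are precisely the images under $T^T$ of the generators of $C$ lying on the facet associated with $\vec b_C$; and that these tight generators span a hyperplane of $\tilde C$. The first two points are one-line computations; the third is a dimension count, and this is where the hypothesis $G\vec b_C = 0$ does the work.

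First I would fix notation. Let $I_0 = \{\,i : \vec w_i^T \vec b_C = 0\,\}$ be the indices of generators lying on the facet $F = C \cap \{\vec x^T \vec b_C = 0\}$, and let $W = \Span\{\,\vec w_i : i \in I_0\,\}$. Since faces of a finitely generated cone are generated by the generators lying in them, $F = \cone\{\,\vec w_i : i \in I_0\,\}$ and hence $\dim W = \dim F = D$ (assuming, as in the application, that $C$ is full-dimensional in $\mathbb R^{D+1}$, so a facet has dimension $D$); consequently $W^\perp = \Span\{\vec b_C\}$ is one-dimensional. The identity to exploit throughout is $\vec{\tilde w_i}^T \vec b_{\tilde C} = \vec w_i^T T \vec b_{\tilde C} = \vec w_i^T \vec b_C$, using $\vec b_C = T \vec b_{\tilde C}$. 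It gives validity ($\vec{\tilde w_i}^T \vec b_{\tilde C} \ge 0$ for all $i$), and it shows $\vec{\tilde w_i}^T \vec b_{\tilde C} = 0$ iff $i \in I_0$. Therefore the face $\tilde F := \tilde C \cap \{\vec x^T \vec b_{\tilde C} = 0\}$ equals $\cone\{\,\vec{\tilde w_i} : i \in I_0\,\}$, again because faces of a finitely generated cone are generated by the generators they contain.

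The crux is to show $\dim \tilde F = K - 1$. Since $\vec{\tilde w_i} = T^T \vec w_i$, we have $\Span\{\,\vec{\tilde w_i} : i \in I_0\,\} = T^T W$, and rank--nullity for $T^T$ restricted to $W$ gives $\dim T^T W = \dim W - \dim(W \cap \ker T^T) = D - \dim(W \cap \ker T^T)$. Now $\ker T^T = (\mathrm{im}\,T)^\perp = (\ker G)^\perp$, a subspace of dimension $(D+1) - K$. This is the point at which the hypothesis enters: $G\vec b_C = 0$ means $\vec b_C \in \ker G$, so $W^\perp = \Span\{\vec b_C\} \subseteq \ker G$, and taking orthogonal complements yields $(\ker G)^\perp \subseteq W$. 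Hence $W \cap \ker T^T = (\ker G)^\perp$ has dimension $(D+1) - K$, and $\dim \tilde F = \dim T^T W = D - (D+1-K) = K-1$. Since $\tilde C$ has dimension $K$ — its generators $T^T \vec w_i$ span $T^T\mathbb R^{D+1}$, which is $K$-dimensional because the $\vec w_i$ span $\mathbb R^{D+1}$ and $\mathrm{rank}\,T = K$ — the face $\tilde F$ has dimension exactly one less than $\tilde C$, so it is a facet, with normal vector $\vec b_{\tilde C}$ (nonzero because $T$ is injective). That is the assertion.

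I expect the main obstacle to be conceptual rather than computational: seeing that the projection $T^T$ applied to the $D$-dimensional facet span $W$ drops its dimension by exactly one, no more. In general, projecting onto $\ker G$ could collapse $W$ much further; what rescues us is that the unique direction of $W$ that could be killed, namely $W^\perp = \Span\{\vec b_C\}$, is forced to lie inside $\ker G$ precisely by the constraint $G\vec b_C = 0$. A minor point to state explicitly is the standing full-dimensionality assumption on $C$ (equivalently on the polytope $P$), which is what makes $\dim W = D$, $\dim W^\perp = 1$, and $\dim \tilde C = K$ simultaneously valid.
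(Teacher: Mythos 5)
Your proof is correct and follows essentially the same route as the paper's: the identity $\vec{\tilde w_i}^T \vec b_{\tilde C} = \vec w_i^T \vec b_C$ gives validity and identifies the tight rays, and the remaining work is the dimension count showing those tight rays span a $(K-1)$-dimensional space. Your rank--nullity computation for $T^T$ restricted to $W$ is the transposed form of the paper's argument that $\ker(BT)$ is exactly one-dimensional (via $\ker B = \Span\{\vec b_C\}$ and injectivity of $T$); your version has the additional merit of making explicit where $G\vec b_C=0$ enters, namely in forcing $(\ker G)^\perp \subseteq W$ so that only the single direction $\vec b_C$ is collapsed.
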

\begin{proof}

We prove the statement in three steps.
(1) The inequality $\vec{\tilde w_i}^T \vec b_{\tilde C} \ge 0$ holds, since \eq{btilde} together with the definition
of the $\vec{\tilde w_i}$ implies 
\begin{align}
\vec{\tilde w_i}^T \vec b_{\tilde C} = \vec{w_i}^T \vec b_C \label{eq:tildenotilde}
\end{align} and $\vec{w_i}^T \vec b \ge 0$ because $\vec b_C$ is facet defining.

(2) The vector $\vec b_{\tilde C}$ defines a face of $\tilde C$, as one can directly see from \eq{tildenotilde}.
With $K = \dim(\ker G)$, the dimension of the face is at most $K-1$, since it is
contained in the $K-1$ dimensional subspace $\{\vec x\,| \,\vec x^T \vec b_{\tilde C} = 0\}$.

(3) The vector $\vec b_{\tilde C}$ defines a facet of $\tilde C$. That is, the dimension of the face is exactly $K-1$. 

Let $B$ be the $M \times(D+1)$ matrix that contains all M rays $\vec w_i$ as rows that
 fulfill $\vec w_i^T \vec b_C = 0$. Since $\vec b_C$ is a facet normal vector,
$B$ has rank $D$. Accordingly, $BT$ is the $M \times K$ matrix that contains all rays $\vec{\tilde w_i}^T$ as 
rows  that fulfill $\vec{\tilde w_i}^T \vec b_{\tilde C} = 0$.
Showing that $\vec{\tilde b}$ defines a facet is equivalent to showing that $\rank(BT) = K-1$. We now prove the latter by contradiction.
Assume there exist two linearly independent vectors
$\vec{\tilde b}, \vec{\tilde c}$ that satisfy $BT\vec{\tilde b} =
BT\vec{\tilde c} = 0$. Thus, $T\vec{\tilde b}$ and $T\vec{\tilde c}$ lie
in the kernel of $B$. Since $\rank(B) = D$, the kernel is one-dimensional, so
we can write $T \vec{\tilde c} = \ell T \vec{\tilde b}$ for some real number
$\ell$. This implies $T(\vec{\tilde c} + \ell \vec{\tilde b}) = 0$. Because
$\vec{\tilde b}$ and $\vec{\tilde c}$ are linearly independent, the kernel
of $T$ has at least dimension one, which is impossible because $T$ has full
column rank.
\end{proof}

The facets of interest of the polytope $P$ can now be found by finding the facets of $\tilde
C$ first, calculating potential facets of $C$ via \eq{btilde}, transforming
these into potential facets of $P$ and finally checking which of the found
inequalities define  facets of $P$.

Note that the method works in the same way for polyhedra. To this end we embed the polyhedron in
a $D+1$ dimensional space by prepending a coordinate to every ray and vertex. 
This coordinate is then set to $0$ for rays and to $1$ for vertices, so the polytope
will lie in the hyperplane defined by $x_0 = 1$. Then the corresponding cone $C$
is the conic hull of all these vectors.

\section{Appendix B: Description of the numerical procedure}

In order to find all generalizations of the I3322 inequality using our
algorithm, we first need to determine its inputs, namely, the rays $\vec
w_i$ of the cone $C$ and the matrix $G$ that captures the linear conditions on
normal vectors of the facets we aim to find. These conditions are established
through the equation $G \vec b_C = 0$. In the following, $b_{ijk}$ denote the
coefficients of the vector $\vec b_C$. We now find the rays of $C$ by first finding
all 512 local deterministic behaviours $\vec v_i$ of the scenario with three
parties, three measurements per party and two outcomes per measurement. Then, we compute the rays of $C$
as
\begin{align}
\vec w_i = \binom{1}{\vec v_i}.
\end{align}
By doing this, we include the trivial correlation $\langle A_0 B_0 C_0\rangle =
1$ as first coordinate.
Now that the rays of $C$ are found, we construct $G$. The first constraint
$G$ is supposed to capture is the symmetry of the Bell inequality under party
permutations. Concretely, the coefficients of the Bell inequality $S =
\sum_{i,j,k} b_{ijk} \langle A_i B_j C_k \rangle$ have to fulfill
\begin{align}
b_{ijk} - b_{jki}& = 0, \label{eq-threeind1}\\
b_{ijk} - b_{kij}& = 0, \\
b_{ijk} - b_{ikj}& = 0, \\
b_{ijk} - b_{kji}& = 0, \\
b_{ijk} - b_{jik}& = 0, \label{eq-threeind2}\\
\intertext{and}
b_{iij} - b_{iji}& = 0, \label{eq-twoind1}\\
b_{iij} - b_{jii}& = 0, \\
b_{jji} - b_{jij}& = 0, \\
b_{jji} - b_{ijj}& = 0, \label{eq-twoind2}
\end{align}
with $i<j<k$. In our scenario, where
the indices take values from $0$ to $3$ we get Eq.~(\ref{eq-threeind1})-(\ref{eq-threeind2}) 
for four different values of $(i,j,k)$ and
Eqs.~(\ref{eq-twoind1})-(\ref{eq-twoind2}) for six different values of $(i,j)$.
This gives us $44$ equations in total. Each of them will be stated as $\vec g_i^T
\vec b_C = 0$ and the vectors $\vec g_i$ then form the rows of the matrix $G$.
Since $\vec b_C$ is a vector in dimension $64$, $G$ has $64$ columns.

Next, we want to ensure that the Bell inequalities we are about to find 
are not only symmetric, but also generalizations of I3322. As explained 
in the main text, there must exist deterministic outcomes $\xi_k = \pm 1$ for 
the measurements on $C$, such that the coefficients of the I3322 inequality 
are obtained via $I^{3322}_{ij} = \sum_k b_{ijk} \xi_k$. So, if a bipartite 
behaviour $\langle A_i B_j\rangle$  saturates I3322, the extended behaviour 
$r_{ijk} = \langle A_i B_j \rangle \xi_k$ saturates its generalization. 
Conversely, if a Bell inequality is saturated by the extended behaviour 
$r_{ijk}$, then $\langle A_i B_j\rangle$ also saturates I3322. Since a
facet is defined by all the vertices it contains, a symmetric and facet 
defining Bell inequality is a generalization of I3322 if and only if for 
all behaviours that saturate I3322 there is local deterministic assignment 
$\vec \xi$, such that the extended behaviours saturate this Bell inequality. 

Since there are three measurements per party and we have the choice between 
two outcomes, we have to take eight possible deterministic assignments into 
account. For each of them, we find the generalized inequalities that can be 
reduced to I3322 by performing the assignment. Hence, we have to run our 
algorithm once for each local deterministic assignment of one party, totalling 
in eight runs in our case. In each run, we have to complete the
matrix $G$ according to the chosen assignment, only the first $44$ rows that
implement the symmetry conditions stay the same. For each run of the algorithm,
we choose one local deterministic assignment $\vec \xi$ and obtain one
extended behaviour $\vec r$ for each behaviour that saturates the I3322
inequality. The extended behaviour then has to saturate a potential
generalization of I3322. This gives us the condition
\begin{align}
\vec r^T \vec b_C = 0. \label{eq-genconstr}
\end{align}
There are 20 local deterministic behaviours that saturate I3322, so we get 20
conditions of type Eq.~(\ref{eq-genconstr}) and the extended
behaviours $\vec r$ make up the next 20 rows of $G$. Therefore, $G$ is a $64
\times 64$ matrix. However, it only has rank 53 because some of the
extended behaviours are already related through the symmetry we implemented
earlier.

The kernel of $G$ can now be found using for example the sympy package in python,
which returns basis vectors with integer coordinates (since $G$ has integer
entries). In our case, the kernel has dimension 11. For each of the 512
rays of $C$, we now obtain a ray of $\tilde C$ in the way described in
Theorem~1. Note, that the map of a ray of $C$ to a ray that lies in $\ker(G)$ is
not necessarily a projection as the basis vectors $\vec t_i$ are not necessarily
normalized. In fact, we want to avoid normalization to preserve the property
that all the objects we deal with only have integer entries. The mapping of rays
of $C$ to rays of $\tilde C$ is not injective. In fact, only 88 rays remain that span the cone $\tilde C$.
The facets of $\tilde C$ can be found within seconds using standard polytope
software such as cdd \cite{cdd}. From these facets, we keep those that
correspond to facets of $C$.

Finally, we compose a list of all valid facets from the eight runs of the
algorithm and remove Bell inequalities that are equivalent to other Bell
inequalities in the list up to relabeling of the measurements or outcomes.
In this way, all 3050 three-party generalizations of I3322 are found.

\section{Appendix C: Constructing LHV models from symmetric quasi-extensions}
\label{app-symext}

In this appendix we first shortly recall the result that a state 
$\varrho_{ABC}$ does not violate any Bell inequality with two 
settings on $B$ and $C$ and an arbitrary number $N$ of measurement 
settings on $A$, if there exists a symmetric extension, that is a 
positive semidefinite operator  $H_{ABB'CC'} = H_{AB'BCC'}$ that 
fulfills $\tr_{B'C'}(H_{ABB'CC'}) = \varrho_{ABC}$. The main idea
stems from \cite{terhal2003}. We focus on the case that is relevant for our paper. 
For convenience, we denote the symmetric extension simply as
$H$, whenever this is possible without causing confusion. 

The argument that links symmetric extensions to local hidden variable models 
goes as follows: Firts, if a symmetric extension exists, we can write, for
any fixed measurement setting $i$ of Alice, a joint probability for both
measurements on Bob and Charlie via
\begin{align}
p(a_i, b_1, b_2, c_1, c_2) = \tr(H  E_{a_i} \otimes E_{b_1} \otimes
E_{b_2} \otimes
E_{c_1} \otimes E_{c_2})
\end{align}
where $E_{a_i}$ is the effect that corresponds to the outcome $a_i$ 
etc. Note that the non-negativity of the operator $H$ is not required, 
it is sufficient if the operator is an entanglement witness, i.e. 
non-negative on product states. Such an operator is called symmetric 
quasi-extension \cite{terhal2003}.

Then, we can define a for all measuremements a joint probability distribution
via \cite{wolf2009}
\begin{align}
p(a_1, & \ldots, a_N,  b_1, b_2, c_1, c_2) \notag\\ = &\frac{p(a_1, b_1, b_2, c_1, c_2)
\cdots p(a_N, b_1, b_2, c_1, c_2)}{p(b_1, b_2, c_1, c_2)^{N-1}}.
\end{align}
Finally, it is well established that if such a joint probability 
distribution exists, then a LHV model exists and no Bell inequality 
can be violated \cite{fine1982, abramsky2011}.

In our work, we make use of this connection to find a three-qubit state 
$\varrho_{ABC}$ that violates one of the Bell inequalities $F_i$ in 
Eqs.~(\ref{eq-bi1})-(\ref{eq-bi3}), but which is not violating any 
three-partite Bell inequality with two settings per party and, in addition, 
it is not violating any bipartite Bell inequality (such as I3322) with any of 
its two-body marginals. We achieve this by demanding that the state $\varrho_{ABC}$ 
possesses a symmetric extension and separable two-body marginals. Since
the marginals are two-qubit states, the separability condition can be 
implemented using the criterion of the positivity of the partial transpose 
\cite{horodecki1996,peres1996}. 

In this way, all of the constraints are semidefinite constraints. Maximizing 
the violation of a Bell inequality under these constraints is a semidefinite 
program, if the measurement settings are given. As initially no measurement 
settings are given, we pick random measurement settings for the three parties 
before optimizing over the state. Then, keeping the state fixed, each optimization 
over the measurement settings of a single party is again a semidefinite programm. 
We then alternate between these two steps -- the optimization of the state on the 
one hand and the optimization of the measurement settings on the other -- in a 
seesaw algorithm. We solve the semidefinite programs with Mosek \cite{mosek} 
through Picos \cite{picos}. 

Typically, after 50 iterations a convergence is reached.  In this way, we 
find a state $\varrho_{ABC}^{(1)}$ with $F_1 \approx -0.065$, a state 
$\varrho_{ABC}^{(2)}$ with $F_2 \approx -0.043$ and a state $\varrho_{ABC}^{(3)}$
with $F_3 \approx -0.063$. The symmetric extensions and measurement
settings that lead to these violations can be found online.
\footnote{The files {\tt symm-extensions.txt}, {\tt F1\_H.txt}, {\tt F2\_H.txt}, and {\tt F3\_H.txt}
are included in the source files of this arxiv submission.}.

\bibliography{biblio}

\end{document}